\newtheorem{theorem}{Theorem}[section]
\newtheorem{proposition}[theorem]{Proposition}
\newtheorem{lemma}[theorem]{Lemma}
\newtheorem{corollary}[theorem]{Corollary}
\newtheorem{remark}[theorem]{Remark}
\numberwithin{equation}{section}
\begin{document}
\baselineskip=16pt

\title[Unitary representations of cocompact lattices of
$\text{SL}(2,{\mathbb C})$]{Solutions of Strominger 
system from unitary representations of cocompact 
lattices of $\text{SL}(2,{\mathbb C})$}

\author[I. Biswas]{Indranil Biswas}

\address{School of Mathematics, Tata Institute of Fundamental
Research, Homi Bhabha Road, Mumbai 400005, India}

\email{indranil@math.tifr.res.in}

\author[A. Mukherjee]{Avijit Mukherjee}

\address{Department of Physics,
Jadavpur University, Raja S. C. Mullick Road, Jadavpur,
Kolkata 700032, India}

\email{avijit00@gmail.com}

\subjclass[2000]{81T30, 14D21, 53C07}

\keywords{Strominger system, Calabi-Yau threefold, torsion,
cocompact lattice, unitary representation}

\date{}

\begin{abstract}
Given an irreducible unitary representation of a cocompact lattice 
of $\text{SL}(2,{\mathbb C})$, we explicitly write down a solution
of the Strominger system of equations. These solutions satisfy the
equation of motion, and the underlying holomorphic vector 
bundles are stable.
\end{abstract}

\maketitle

\section{Introduction}

Evoking physical requirements from anomaly cancellations, realistic 
fermionic spectrum and the appropriate amount ($N = 1$) of Space-time 
supersymmetry, Candelas {\it et. al} had 
originally proposed a model for compactification of the superstring, by 
analyzing the vacuum configurations of these 10-dimensional theories 
\cite{CHSW}. Anomaly cancellation requirements (which constrain the 
gauge groups of these models to be $O(32)$ or $E_8 \times E_8$), along 
with the requirement of a zero cosmological constant, then lead 
them
to propose/construct the 10-dimensional vacuum solutions of 
these theories to be of the metric product type $ X_4 \times {\mathcal 
M}$, where $X_4$ is the maximally symmetric $4d$ space-time (which 
should admit unbroken $N = 1$ supersymmetry), and $\mathcal M$ is a 
complex 3-dimensional Calabi-Yau manifold. Subsequently, these 
conclusions were further generalized to include other gauge groups (like
${\rm SU}(4)$ or ${\rm SU}(5)$), as would arise when considering compactifications
for the strongly coupled heterotic string theory. The correspondence 
between the algebro-geometric notion of stable vector bundles 
and the existence of Hermitian-Yang-Mills connections was one of the primary 
mathematical input underlying these derivations \cite{Wi1}. In all these 
examples, the supersymmetric vacuum (manifold) was assumed to be one whose
geometry had no torsion. Hence 
the existence of a solution on such a given manifold was mostly a topological 
question and the issue of existence of appropriate solutions (obeying all 
the physical requirements) often boiled down to a set of conditions on 
the Chern classes of the vacuum manifold $\mathcal M$ and the 
Yang-Mills Gauge connections.

In 1986, Strominger investigated the necessary and sufficient 
conditions for space-time supersymmetric solutions of the heterotic string.
While considering more general space-times as solutions to the heterotic superstring 
solutions, Strominger, \cite{St}, was lead to considering vacuum 
configurations {\it with} torsion. He relaxed the requirement 
of the 10-dimensional vacuum metric by considering that, for more general vacuum
configurations (which can sustain non-zero fluxes as well 
as space-time supersymmetry), the 
10-dimensional space-time be a {\it warped} product of $X_4$ and 
the 6-dimensional internal space $\mathcal M$. Analyzing the 
constraints imposed by the requirements of $N=1$ space-time ({\it 
i.e.,} 4 dimensional) supersymmetry (and other usual 
consistency requirements like anomaly cancellation), Strominger then
established that the 6-dimensional internal manifold $\mathcal 
M$ should be a compact, connected, complex manifold (hereafter denoted 
as $M$), such that its canonical line bundle $K_M$ 
is holomorphically trivial. Let $\omega \ = \
{\frac{\sqrt{-1}}{2}} g_{ij} dz^i \wedge dz^j$ be a $(1,1)$ 
Hermitian form on $M$, and let 
$\nabla^M$ be a connection on $TM$ compatible with $\omega$. We denote its curvature by $R$. Further, let $E$ be a holomorphic vector bundle on $M$ equipped with the (gauge) connection $A$, and corresponding curvature $F_A$. It turns out that the anomaly cancellation condition then demands that the Hermitian $(1,1)$ form $\omega$ obeys an equation of the form:
$$ \sqrt{-1} \, {\partial \overline\partial}  \,  \omega    \   =  \   
\frac{\alpha^\prime}{4} \  \left(  {{\rm trace}(R \wedge R)}  - 
{{\rm trace}({F_A} \wedge {F_A})} \right)\, .$$
The consistency conditions from requirements of the space-time 
supersymmetry translates into the equation:
$$    d^*  \omega   \   =  \   \sqrt{-1}    \left({\overline\partial}  -  {\partial}\right)   \ln \Vert \Omega\Vert_\omega
$$  
for the Hermitian form $\omega$ and the holomorphic 3-form $\Omega$. The previous equation may also be equivalently re-written as  \cite{LY2}: 
$$   d  \left(\Vert  \Omega \Vert_\omega  \cdot \omega^2\right)    \  =  \  0$$
  
The above equations, along with the system (constraining the Yang-Mills Gauge theory content):
$$F_A^{2,0}    \   =  \   F_A^{0,2}  \    =  \   0,  \quad  \quad  F \wedge \omega^2  \  =  \  0
$$
gives a complete and general solution of a superstring theory with torsion and
with a flux that allows a non-trivial dilation field (cosmological
constant). Henceforth, the above system of equations (which are derived solely from
the explicit requirements stemming from Superstring theory)
would be referred to as the {\it Strominger system} of 
equations. Thus, by considering vacuum geometries with 
torsion, Strominger was  able to relax the requirement of $M$ to 
be K{\"a}hler and consider more general complex 3-manifolds. But 
the price to be paid was that the familiar 
tools and methods from K\"ahler geometry could now no longer 
be applied to these more general cases. Moreover, a purely topological characterization and classification  of these heterotic
superstring vacua solutions ($i.e.$, the Chern classes of the bundles $E$ and
the vacuum manifold $M$), would no longer suffice.

The above results provide us with the necessary and sufficient 
conditions for any heterotic superstring theory solution 
(admitting space-time supersymmetry for its vacuum 
configuration) to exist, but in practice, it is quite a 
difficult matter to exhibit or actually explicitly construct a 
solution which exists (and satisfies the Strominger equation). 
Apart from its interest and usefulness in the context of string 
theory, it is also of interest from a mathematical point of 
view to find solutions ($i.e.$, construct the bundles $E$ with 
the appropriate connection $A$ for a given manifold $M$ with 
properties as defined above) of the Strominger system. In 
recent years, there has been a flurry of activities surrounding 
this problem of providing explicit constructive methods for 
solutions of these Strominger systems (cf. \cite{AF1}, \cite{AF2}, \cite{Iv} and
references therein). The present paper explores a new and altogether different constructive scheme, based on an approach that does not require the perturbative/deformation prescription.

Further attempts at exploring more general vacuum configurations for the heterotic string with non-zero fluxes have lead to some additional corrections to the original analysis of Strominger. These come from considering (${\rm SU}(3)$)
instanton corrections at higher loops, and lead to the additional consistency conditions (for the solutions of the Strominger system) and these are:
$$ R^{2,0}  \  =  \  R^{0,2}  \  =  0,  \quad  \quad  \quad  R 
\wedge  \omega^2  \  =  \  0\,  .
$$
These are referred to as equations of motion. Here
we shall consider those solutions of the
Strominger system which also additionally satisfy the above
conditions. 

In recent years, there has been a lot of activity,  in trying to 
construct actual/explicit examples which are solutions to the  
above extended {\it Strominger system}. In
\cite{FTY}, Fu, Tseng and Yau have studied 
the existence of smooth solutions to the Strominger system.
They proposed a perturbation method where 
deformation 
theory results were used to construct solutions for some $U(4)$ 
and $U(5)$ principal bundles. Subsequent generalizations of this 
method lead to the construction of new examples (of solutions to 
the Strominger system) on a class of non-K{\"a}hler 
three-dimensional manifolds like $T^2$-bundles over a $K3$ 
surface, or $T^2$-bundles over Eguchi-Hanson spaces. 
Nevertheless finding new/more examples  of such solutions has 
proved to be rather tricky,
and it seems that there is no general ansatz/scheme for 
constructing an example; instead one has to invent specific 
prescriptions and construction
procedure for every new example.

In the present work, we produce solutions of the 
Strominger system from irreducible unitary representations of
any cocompact lattice in $\text{SL}(2,{\mathbb C})$.
Let $\Gamma$ be a cocompact lattice in $\text{SL}(2,{\mathbb 
C})$ (meaning $\text{SL}(2,{\mathbb C})/\Gamma$ is compact), and 
let $\rho\, :\, \Gamma\, \longrightarrow\, 
\text{U}(n)$ be an irreducible homomorphism, meaning no
nonzero proper linear subspace of ${\mathbb C}^n$ is left
invariant by the action of the image $\rho(\Gamma)$. The compact 
complex manifold $M\, :=\,\text{SL}(2,{\mathbb C})/\Gamma$ has 
trivial canonical line
bundle, and $M$ is equipped with a natural 
Hermitian structure. The Chern connection on $TM$ for this
Hermitian structure has the following properties:
\begin{enumerate}
\item the torsion of the connection is totally skew--symmetric, meaning
it is a section of $\bigwedge^3 TM$, and

\item the holonomy of the connection lies in $\text{SU}(3)$
\end{enumerate}
(see Corollary \ref{cor-n1}). The homomorphism $\rho$ produces a
holomorphic vector bundle over $M$ with a flat unitary 
connection. This vector bundle is stable; see Proposition
\ref{prop3}. We prove that all these together produce a
solution of the Strominger system satisfying the equation of
motion; the details are in Theorem \ref{thm1}.

\section{Strominger system of equations}

We write down the Strominger system of equations in one place for 
the convenience of later reference in Section \ref{se4}.

Let $M$ be a compact connected complex manifold of dimension
three such that the canonical line bundle $K_M\, :=\, 
\bigwedge\nolimits^3 \Omega^1_M$ is holomorphically trivial. Let
$$
\Omega\, \in\, H^0(M,\, K_M)
$$
be a nowhere vanishing holomorphic section. Let $\omega$ be
a Hermitian $(1\, ,1)$--form on $M$. Take a connection 
$\nabla^T$
on $TM$ compatible with $\omega$; its curvature
will be denoted by $R$. Let $E$ be a holomorphic
vector bundle on $M$ equipped with a connection $A$. Let $F_A$
be the curvature of $A$. Let $d^*$ be the adjoint of $d$ with respect to 
$\omega$; it sends smooth $k$ forms on $M$ to $k-1$ forms.

The sextuple $(M\, ,\Omega\, ,\omega\, , \nabla^T\, , E\, ,A)$ is said 
to solve
the \textit{Strominger system} if the following equations hold:
\begin{equation}\label{st1}
F^{2,0}_A\,=\, F^{0,2}_A\,=\, 0,\, F\wedge\omega^2\,=\, 0
\end{equation}
\begin{equation}\label{st2}
d^*\omega \,=\, \sqrt{-1}(\overline{\partial}- \partial)\Vert \Omega
\Vert_\omega
\end{equation}
\begin{equation}\label{st3}
d(\Vert \Omega\Vert_\omega\cdot \omega^2)\,=\, 0
\end{equation}
\begin{equation}\label{st4}
\sqrt{-1}\partial\overline{\partial}\omega\,=\, \alpha'(\text{trace}(
R\wedge R) - \text{trace}(F_A\wedge F_A)),~ 
\, \text{where}~\, \alpha'\,\in\,
{\mathbb C}\, .
\end{equation}
A Strominger system $(M\, ,\Omega\, ,\omega\, , E\, ,A)$ as above
is said to solve the \textit{equation of motion} if
\begin{equation}\label{st5}
R^{2,0}\,= 0\, = \, R^{0,2}\, ~ \quad {\rm and} \quad ~
\,R\wedge \omega^2\,=\, 0\, .
\end{equation}

\section{Invariant forms on $\text{SL}(2,{\mathbb C})$}\label{sec2}

Consider the complex Lie group $\text{SL}(2,{\mathbb C})$.
Let $h_0$ be the Hermitian structure on the Lie algebra
$sl(2,{\mathbb C})$ of $\text{SL}(2,{\mathbb C})$ defined by
\begin{equation}\label{h0}
h_0(A, B)\, =\, \text{trace}(AB^*)\, ,
\end{equation}
where $B^*\,=\, \overline{B}^t$. Note that the adjoint action of $\text{SU}(2)$
on $sl(2,{\mathbb C})$ preserves $h_0$.

Using the right--translation invariant vector fields on 
$\text{SL}(2,{\mathbb C})$, we identify the holomorphic tangent
bundle $T\text{SL}(2,{\mathbb C})$ with the trivial vector
bundle $$\text{SL}(2,{\mathbb C})\times sl(2,{\mathbb C})\,\, 
\longrightarrow\, \text{SL}(2,{\mathbb C})$$
with fiber $sl(2,{\mathbb C})$. Let $h$ be the
unique right--translation invariant Hermitian structure on 
$\text{SL}(2,{\mathbb C})$ such that
$$
h\vert_{T_e\text{SL}(2,{\mathbb C})}\, =\, h_0\, ,$$
where
$e\, \in\, \text{SL}(2,{\mathbb C})$ is the identity element.
Let
\begin{equation}\label{oh}
\omega_h\, \in\, C^\infty(\text{SL}(2,{\mathbb C}), \,
\Omega^{1,1}_{\text{SL}(2,{\mathbb C})})
\end{equation}
be the K\"ahler form associated to the Hermitian structure $h$
on $\text{SL}(2,{\mathbb C})$. We note that $d\omega_h\,\not=\, 0$.

\begin{proposition}\label{prop1}
Let $\xi\, \in\, C^\infty({\rm SL}(2,{\mathbb C}),\,
\Omega^{1,0}_{{\rm SL}(2,{\mathbb C})}\oplus
\Omega^{0,1}_{{\rm SL}(2,{\mathbb C})})$ be a complex $1$--form
on ${\rm SL}(2,{\mathbb C})$ such that
\begin{itemize}
\item the right--translation action of ${\rm SL}(2,{\mathbb C})$
on itself preserves $\xi$, and

\item the left--translation action of ${\rm SU}(2)$
on ${\rm SL}(2,{\mathbb C})$ preserves $\xi$.
\end{itemize}
Then
$$
\xi\,=\, 0\, .
$$
\end{proposition}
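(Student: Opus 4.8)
The plan is to exploit the fact that a $1$-form invariant under right-translation by all of $\mathrm{SL}(2,\mathbb{C})$ is completely determined by its value at the identity $e$, i.e.\ by an element of $sl(2,\mathbb{C})^* \oplus \overline{sl(2,\mathbb{C})}^*$, and then impose the additional left-$\mathrm{SU}(2)$-invariance as a constraint on that single linear-algebraic datum. First I would write $\xi = \xi^{1,0} + \xi^{0,1}$ and note that right-invariance forces $\xi^{1,0}$ to correspond to a linear functional $\lambda \in sl(2,\mathbb{C})^*$ and $\xi^{0,1}$ to an antilinear functional, via the trivialization $T\mathrm{SL}(2,\mathbb{C}) \cong \mathrm{SL}(2,\mathbb{C}) \times sl(2,\mathbb{C})$ already fixed in the text. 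Using the Hermitian structure $h_0$ from \eqref{h0} we may further represent $\lambda$ by a matrix $X \in sl(2,\mathbb{C})$, so that $\xi^{1,0}(v) = \mathrm{trace}(vX^*)$ at $e$, and similarly for the $(0,1)$ part.

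Next I would translate the left-$\mathrm{SU}(2)$-invariance. Left translation by $g \in \mathrm{SU}(2)$, when written in the right-invariant trivialization, acts on the fiber $sl(2,\mathbb{C})$ over a general point, but at the level of the identifying data it conjugates $X$ by the adjoint action: the condition becomes $\mathrm{Ad}(g)X = X$ for all $g \in \mathrm{SU}(2)$, and likewise for the $(0,1)$ datum. (I would check the precise form of this equivariance by a short computation comparing left- and right-translation pushforwards at $e$; this is the one place where signs and conjugations must be handled carefully, but it is routine.) So everything reduces to: which $X \in sl(2,\mathbb{C})$ are fixed by the adjoint $\mathrm{SU}(2)$-action?

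The key representation-theoretic input is then that $sl(2,\mathbb{C})$, as a real representation of $\mathrm{SU}(2)$ under the adjoint action, contains \emph{no} nonzero trivial subrepresentation. Indeed $sl(2,\mathbb{C}) \cong \mathbb{C}^3$ is the complexification of $\mathbb{R}^3$ with $\mathrm{SU}(2) \to \mathrm{SO}(3)$ acting in the standard way, which is irreducible over $\mathbb{R}$; hence $(sl(2,\mathbb{C}))^{\mathrm{SU}(2)} = 0$, and the same holds for the conjugate representation. Therefore $X = 0$ and the analogous $(0,1)$ datum vanishes, so $\xi^{1,0} = \xi^{0,1} = 0$ and $\xi = 0$.

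The main obstacle I anticipate is purely bookkeeping: getting the interaction between the right-invariant trivialization and the left-$\mathrm{SU}(2)$-action exactly right, so that the invariance condition genuinely becomes the adjoint-fixed-point equation $\mathrm{Ad}(g)X = X$ rather than something twisted by an extra translation. Once that identification is pinned down, the conclusion is immediate from the fact that the isotropy representation of $\mathrm{SU}(2)$ on the tangent space at the base point has no invariant vectors. An alternative, coordinate-free route avoiding the trivialization subtlety: average any such $\xi$ is unnecessary since it is already invariant, so instead restrict $\xi$ to the $\mathrm{SU}(2)$-orbit through $e$ and observe that the value $\xi_e$ must lie in the $\mathrm{SU}(2)$-invariants of $T^*_e\mathrm{SL}(2,\mathbb{C})_{\mathbb{C}}$, which is zero; right-invariance then propagates $\xi_e = 0$ to all of $\mathrm{SL}(2,\mathbb{C})$.
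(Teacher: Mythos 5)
Your proposal is correct and follows essentially the same route as the paper: reduce $\xi$ by right-invariance to its value at $e$, observe that the combined invariances force that value to be fixed by the (co)adjoint $\mathrm{SU}(2)$-action, and conclude from the vanishing of $\mathrm{SU}(2)$-invariants in $sl(2,\mathbb{C})$ (the paper dualizes via the trace pairing where you use $h_0$, an immaterial difference) that the value at $e$, and hence $\xi$, is zero.
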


\begin{proof}
Since the holomorphic tangent space of $\text{SL}(2,{\mathbb C})$
at $e\, \in\, \text{SL}(2,{\mathbb C})$ is identified with 
$sl(2,{\mathbb C})$, the evaluation of
$\xi$ at $e$ is an element of
$sl(2,{\mathbb C})^*\bigotimes_{\mathbb R}{\mathbb C}\,=\,
(sl(2,{\mathbb C})\bigotimes_{\mathbb R}{\mathbb C})^*$;
here we identify $(T^{0,1}_e\text{SL}(2,{\mathbb C}))^*$ with
$(T^{1,0}_e\text{SL}(2,{\mathbb C}))^*$ by sending any $u$ to
its conjugate $\overline{u}$. Let
$$
\xi_0\, :=\, \xi(e) \, \in\, sl(2,{\mathbb C})^*\otimes_{\mathbb 
R}{\mathbb C}
$$
be the evaluation of $\xi$ at $e$. The adjoint action of 
$\text{SL}(2,{\mathbb C})$ on $sl(2,{\mathbb C})$ produces an
action of $\text{SL}(2,{\mathbb C})$ on $sl(2,{\mathbb 
C})^*\bigotimes_{\mathbb R}{\mathbb C}$. In particular, we get
an action of ${\rm SU}(2)$ on $sl(2,{\mathbb
C})^*\bigotimes_{\mathbb R}{\mathbb C}$. The two given conditions
on $\xi$ imply that this action of ${\rm SU}(2)$ on
$sl(2,{\mathbb C})^*\bigotimes_{\mathbb R}{\mathbb C}$ fixes
the element $\xi_0$.

Consider the nondegenerate symmetric
bilinear pairing on $sl(2,{\mathbb C})$ defined by
\begin{equation}\label{tr}
(A\, ,B) \, \longmapsto\, \text{trace}(AB)\, .
\end{equation}
It produces an isomorphism of $sl(2,{\mathbb C})$ with
$sl(2,{\mathbb C})^*$ that is equivariant for the actions of
$\text{SL}(2,{\mathbb C})$ on $sl(2,{\mathbb C})$ and
$sl(2,{\mathbb C})^*$. Using this identification between
$sl(2,{\mathbb C})^*$ and $sl(2,{\mathbb C})$, the 
above element $\xi_0$ gives an element
$$
\widetilde{\xi}_0\, \in\, \, \in\, sl(2,{\mathbb C})\otimes_{\mathbb
R}{\mathbb C}\, .
$$
We note that $\widetilde{\xi}_0$ is fixed by the adjoint action of
${\rm SU}(2)$, because
\begin{itemize}
\item $\xi_0$ is fixed by the action of ${\rm SU}(2)$ on
$sl(2,{\mathbb C})^*\otimes_{\mathbb R}{\mathbb C}$, and

\item the isomorphism between $sl(2,{\mathbb C})$ and
$sl(2,{\mathbb C})^*$ is $\text{SL}(2,{\mathbb C})$--equivariant.
\end{itemize}

But no nonzero element of $sl(2,{\mathbb C})$ is fixed by the adjoint
action of ${\rm SU}(2)$ on $sl(2,{\mathbb C})$. This implies that there
is no nonzero element of $sl(2,{\mathbb C})\bigotimes_{\mathbb R}
{\mathbb C}$ that is fixed by the action of ${\rm SU}(2)$, because
$(sl(2,{\mathbb C})\bigotimes_{\mathbb R}{\mathbb C})^{{\rm SU}(2)}\,=\,
sl(2,{\mathbb C})^{{\rm SU}(2)}\bigotimes_{\mathbb R}{\mathbb C}$.
(For an ${\rm SU}(2)$--module $W$, by $W^{{\rm SU}(2)}$ we 
denote the space of invariants for the action of
${\rm SU}(2)$ on $W$.) Hence
we conclude that $\widetilde{\xi}_0\,=\,0$. So, $\xi_0\,=\, 0$. This 
implies that $\xi\,=\,0$ because it is fixed by the right--translation 
action of ${\rm SL}(2,{\mathbb C})$ on itself.
\end{proof}

\begin{proposition}\label{prop2}
Let $\zeta$ be a $C^\infty$ complex $4$--form on ${\rm SL}(2,{\mathbb 
C})$ such that
\begin{itemize}
\item the right--translation action of ${\rm SL}(2,{\mathbb C})$
on itself preserves $\zeta$, and

\item the left--translation action of ${\rm SU}(2)$
on ${\rm SL}(2,{\mathbb C})$ preserves $\zeta$.
\end{itemize}
Then there is constant $c\, \in\, \mathbb C$ such that
$$
\zeta\,=\, c\cdot \omega_h\wedge\omega_h\, ,
$$
where $\omega_h$ is constructed in \eqref{oh}.
\end{proposition}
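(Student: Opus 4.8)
The plan is to follow the pattern of Proposition \ref{prop1}: reduce the statement to a linear-algebra problem at the identity $e\in\text{SL}(2,\mathbb{C})$, and then to a representation-theoretic count. Since $\zeta$ is invariant under the right-translation action of $\text{SL}(2,\mathbb{C})$, it is completely determined by its value
$$
\zeta_{0}\,:=\,\zeta(e)\,\in\,\bigwedge\nolimits^{4}(sl(2,\mathbb{C}))^{*}_{\mathbb{R}}\otimes_{\mathbb{R}}\mathbb{C},
$$
where $sl(2,\mathbb{C})$ is viewed as a real vector space of dimension six via the identification of $T_{e}\text{SL}(2,\mathbb{C})$ with $sl(2,\mathbb{C})$. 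Exactly as in Proposition \ref{prop1}, the left-translation action of $\text{SU}(2)$ corresponds at $e$ to the adjoint action of $\text{SU}(2)$ on $sl(2,\mathbb{C})$, and the two hypotheses on $\zeta$ say precisely that $\zeta_{0}$ is fixed by the induced action of $\text{SU}(2)$ on $\bigwedge^{4}(sl(2,\mathbb{C}))^{*}_{\mathbb{R}}\otimes_{\mathbb{R}}\mathbb{C}$. On the other hand $\omega_{h}\wedge\omega_{h}$ is right-translation invariant (because $h$ is), it is invariant under the left-translation action of $\text{SU}(2)$ (because the adjoint action of $\text{SU}(2)$ preserves $h_{0}$, hence $\omega_{h}$), and it is nowhere vanishing. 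Hence it suffices to show that the space of $\text{SU}(2)$-invariants in $\bigwedge^{4}(sl(2,\mathbb{C}))^{*}_{\mathbb{R}}\otimes_{\mathbb{R}}\mathbb{C}$ is one-dimensional over $\mathbb{C}$; then $\zeta_{0}$ is a constant multiple of $(\omega_{h}\wedge\omega_{h})(e)$, and right-translation invariance upgrades this to $\zeta\,=\,c\cdot\omega_{h}\wedge\omega_{h}$ on all of $\text{SL}(2,\mathbb{C})$.

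Since $W:=\bigwedge^{4}(sl(2,\mathbb{C}))^{*}_{\mathbb{R}}$ is a representation defined over $\mathbb{R}$, and since $(W\otimes_{\mathbb{R}}\mathbb{C})^{\text{SU}(2)}\,=\,W^{\text{SU}(2)}\otimes_{\mathbb{R}}\mathbb{C}$ (the same fact about invariants commuting with $\otimes_{\mathbb{R}}\mathbb{C}$ used in Proposition \ref{prop1}), it is enough to prove that $W^{\text{SU}(2)}$ is one-dimensional over $\mathbb{R}$. For this I would use the decomposition $sl(2,\mathbb{C})\,=\,su(2)\oplus\sqrt{-1}\,su(2)$ into real $\text{SU}(2)$-submodules, on each of which $\text{SU}(2)$ acts by the adjoint action; each summand is a copy of the three-dimensional irreducible real representation $V$ of $\text{SU}(2)$, the action factoring through $\text{SO}(3)$. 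Choosing an $\text{SU}(2)$-invariant inner product (for instance the real part of $h_{0}$) identifies $V^{*}$ with $V$, so $(sl(2,\mathbb{C}))^{*}_{\mathbb{R}}\cong V\oplus V$ and
$$
W\,\cong\,\bigwedge\nolimits^{4}(V\oplus V)\,=\,\bigoplus_{p+q=4,\ p,q\leq 3}\bigwedge\nolimits^{p}V\otimes\bigwedge\nolimits^{q}V\,=\,\Big(V\otimes\bigwedge\nolimits^{3}V\Big)\oplus\Big(\bigwedge\nolimits^{2}V\otimes\bigwedge\nolimits^{2}V\Big)\oplus\Big(\bigwedge\nolimits^{3}V\otimes V\Big),
$$
the pairs $(0,4)$ and $(4,0)$ dropping out because $\dim_{\mathbb{R}}V=3$.

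It then remains to count invariants summand by summand. Using the $\text{SU}(2)$-equivariant isomorphisms $\bigwedge^{3}V\cong\mathbb{R}$ (trivial, the action being orientation-preserving) and $\bigwedge^{2}V\cong V^{*}\cong V$, the three summands become $V$, $V\otimes V$, and $V$. The representation $V$ has no nonzero invariant vector, whereas $V\otimes V\cong\text{Sym}^{2}V\oplus\bigwedge^{2}V$ contains exactly one trivial subrepresentation, namely the line spanned by the invariant inner product inside $\text{Sym}^{2}V$ (equivalently, the tensor square of the spin-$1$ representation of $\text{SU}(2)$ has a unique spin-$0$ component). Therefore $W^{\text{SU}(2)}$ is one-dimensional, and the proof concludes as indicated above. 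I expect the only point needing care to be the bookkeeping in this last step: verifying that one has listed exactly the pairs $(p,q)$ that contribute and that the identifications $\bigwedge^{2}V\cong V$ and $\bigwedge^{3}V\cong\mathbb{R}$ are genuinely $\text{SU}(2)$-equivariant; there is no analytic difficulty.
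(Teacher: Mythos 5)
Your argument is correct, and it follows the paper's overall strategy (right-invariance reduces everything to the value at $e$; the two hypotheses make $\zeta_0$ an $\mathrm{Ad}(\mathrm{SU}(2))$-invariant; $\omega_h\wedge\omega_h$ is a nowhere-zero form satisfying both conditions; so it suffices to show the invariant subspace of $4$-forms at $e$ is one-dimensional). Where you genuinely diverge is in how that one-dimensionality is established. The paper stays with complex representation theory: it fixes a nonzero element of $\bigwedge^6\bigl(sl(2,{\mathbb C})\otimes_{\mathbb R}{\mathbb C}\bigr)$ (on which $\mathrm{SL}(2,{\mathbb C})$ acts trivially) to get an equivariant duality $\bigwedge^4\bigl(sl(2,{\mathbb C})^*\otimes_{\mathbb R}{\mathbb C}\bigr)\cong\bigwedge^2\bigl(sl(2,{\mathbb C})\otimes_{\mathbb R}{\mathbb C}\bigr)$, decomposes the latter as $(\bigwedge^2 sl(2,{\mathbb C}))^{\oplus 2}\oplus\bigl(sl(2,{\mathbb C})\otimes sl(2,{\mathbb C})\bigr)$, and counts invariants using $sl(2,{\mathbb C})\cong\mathrm{Sym}^2(V_0)$; the unique invariant line is spanned by the trace pairing in $\mathrm{Sym}^2(sl(2,{\mathbb C}))$. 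You instead avoid the $\bigwedge^6$-duality altogether, split the real tangent space as $su(2)\oplus\sqrt{-1}\,su(2)\cong V\oplus V$ with $V$ the $3$-dimensional real adjoint representation (an $\mathrm{SO}(3)$-module), compute $\bigwedge^4(V\oplus V)$ directly summand by summand, and complexify at the end using $(W\otimes_{\mathbb R}{\mathbb C})^{\mathrm{SU}(2)}=W^{\mathrm{SU}(2)}\otimes_{\mathbb R}{\mathbb C}$ — exactly the fact the paper also uses in Proposition \ref{prop1}. Your bookkeeping is right: the contributing bidegrees are $(1,3)$, $(2,2)$, $(3,1)$ with dimensions $3+9+3=15=\binom{6}{4}$, the identifications $\bigwedge^3 V\cong{\mathbb R}$ and $\bigwedge^2 V\cong V$ are equivariant (the action factors through $\mathrm{SO}(3)$, and the real part of $h_0$ gives the invariant inner product), and the only invariant comes from the inner product inside $\mathrm{Sym}^2 V\subset V\otimes V$. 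Your route is somewhat more elementary and self-contained; the paper's has the advantage of reusing the same duality-plus-$\mathrm{Sym}^k(V_0)$ machinery in Lemma \ref{lem1} and Proposition \ref{prop-n1}, so the complex formulation is not wasted there.
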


\begin{proof}
As in the proof of Proposition \ref{prop1}, the evaluation of
$\zeta$ at $e$ is an element
$$
\zeta_0\, \in\, \bigwedge\nolimits^4 (sl(2,{\mathbb
C})^*\otimes_{\mathbb R}{\mathbb C})\, .
$$

The adjoint action of $\text{SL}(2,{\mathbb C})$ on $sl(2,{\mathbb C})$
produces an action of $\text{SL}(2,{\mathbb C})$ on the complex line
$\bigwedge^6 (sl(2,{\mathbb C})\bigotimes_{\mathbb R}{\mathbb C})$.
Since $\text{SL}(2,{\mathbb C})$ does not have any nontrivial character,
this action of $\text{SL}(2,{\mathbb C})$ on $\bigwedge^6 (sl(2,{\mathbb 
C})\bigotimes_{\mathbb R}{\mathbb C})$ is trivial.
The adjoint action of $\text{SL}(2,{\mathbb 
C})$ on the Lie algebra
$sl(2,{\mathbb C})$ produces actions of $\text{SL}(2,{\mathbb 
C})$ on $\bigwedge^4 (sl(2,{\mathbb C})^*\bigotimes_{\mathbb 
R}{\mathbb C})$ and $\bigwedge^2(sl(2,{\mathbb 
C})\bigotimes_{\mathbb R}{\mathbb C})$. Fixing a nonzero
element of the line $\bigwedge^6 (sl(2,{\mathbb C})\bigotimes_{\mathbb 
R}{\mathbb 
C})$, we get an $\text{SL}(2,{\mathbb C})$--equivariant isomorphism of 
$\bigwedge^4 (sl(2,{\mathbb C})^*\bigotimes_{\mathbb R}{\mathbb C})$ 
with $\bigwedge^2 (sl(2,{\mathbb C})\bigotimes_{\mathbb 
R}{\mathbb C})$.
Using this isomorphism, the above element $\zeta_0$ gives an element
\begin{equation}\label{e4}
\widehat{\zeta}_0\,\in\, \bigwedge\nolimits^2 (sl(2,{\mathbb 
C})\otimes_{\mathbb R}{\mathbb C})\, .
\end{equation}

The two given conditions on $\zeta$ imply that the 
element $\widehat{\zeta}_0$ in \eqref{e4} is fixed by the action
of $\text{SU}(2)$ on $\bigwedge^2(sl(2,{\mathbb C})\bigotimes_{\mathbb 
R}{\mathbb C})$ (recall that ${\rm SL}(2,{\mathbb C})$
acts on $\bigwedge^2(sl(2,{\mathbb C})\bigotimes_{\mathbb
R}{\mathbb C})$).

Note that
$$
\bigwedge\nolimits^2 (sl(2,{\mathbb C})\otimes_{\mathbb R}{\mathbb 
C})\,=\,
(\bigwedge\nolimits^2 sl(2,{\mathbb C}))^{\oplus 2} \oplus 
(sl(2,{\mathbb C})\otimes sl(2,{\mathbb C}))\, ;
$$
this decomposition is preserved by the action of $\text{SL}(2,{\mathbb 
C})$. There is no nonzero element of $\bigwedge^2 sl(2,{\mathbb C})$
preserved by the action of $\text{SU}(2)$. The subspace of
$sl(2,{\mathbb C})\otimes sl(2,{\mathbb C})$ defined by all
elements fixed pointwise by the action of $\text{SU}(2)$ is
one-dimensional, and it is generated by the element of
$\text{Sym}^2(sl(2,{\mathbb C}))\,\subset\,
sl(2,{\mathbb C})^{\otimes 2}$ given by the 
nondegenerate pairing
in \eqref{tr}. This immediately implies that the space of smooth complex
$4$--forms on $\text{SL}(2,{\mathbb C})$ satisfying the two conditions
in the proposition is one dimensional.

Since the inner product $h_0$ on $sl(2,{\mathbb C})$ in \eqref{h0} is
$\text{SU}(2)$--invariant, it follows immediately that the 
Hermitian
structure $h$ on $\text{SL}(2,{\mathbb C})$ is preserved by the
left--translation action of
$\text{SU}(2)$ on $\text{SL}(2,{\mathbb C})$. Hence the
K\"ahler form $\omega_h$ on $\text{SL}(2,{\mathbb C})$ is preserved by 
the left--translation action
of $\text{SU}(2)$ on $\text{SL}(2,{\mathbb C})$. Recall that
$\omega_h$ is also preserved by the right--translation action
of $\text{SL}(2,{\mathbb C})$ on itself. Therefore, $\omega_h\bigwedge \omega_h$
is a nonzero complex $4$--form satisfying the two conditions in the
proposition. Since the space of smooth 
complex $4$--forms on $\text{SL}(2,{\mathbb C})$ satisfying the two 
conditions in the proposition is one dimensional, we now conclude
that $\zeta$ is a 
constant scalar multiple of $\omega_h\bigwedge \omega_h$.
\end{proof}

\begin{lemma}\label{lem1}
The differential form $\omega_h$ in \eqref{oh} satisfies the identity
$$
d(\omega_h\wedge \omega_h)\, =\, 0\, .
$$
\end{lemma}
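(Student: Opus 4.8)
The plan is to use the invariance properties of $\omega_h$ to pin down the $5$--form $d(\omega_h\wedge\omega_h)$ completely. Since $\text{SL}(2,{\mathbb C})$ has real dimension six and $\omega_h\wedge\omega_h$ is a complex $4$--form, $d(\omega_h\wedge\omega_h)$ is a complex $5$--form. The form $\omega_h$ is preserved by the right--translation action of $\text{SL}(2,{\mathbb C})$ on itself and, as shown in the proof of Proposition \ref{prop2}, also by the left--translation action of $\text{SU}(2)$; since $d$ commutes with pullback by diffeomorphisms, the complex $5$--form $d(\omega_h\wedge\omega_h)$ is again preserved by the right--translation action of $\text{SL}(2,{\mathbb C})$ and by the left--translation action of $\text{SU}(2)$. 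It therefore suffices to show that the only complex $5$--form on $\text{SL}(2,{\mathbb C})$ enjoying these two invariance properties is $0$.

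To prove this I would argue exactly as in Propositions \ref{prop1} and \ref{prop2}. Because $\omega_h$ is a positive $(1,1)$--form, the $6$--form $\omega_h\wedge\omega_h\wedge\omega_h$ is nowhere vanishing, and, being built from $\omega_h$, it is preserved both by the right--translation action of $\text{SL}(2,{\mathbb C})$ and by the left--translation action of $\text{SU}(2)$. Contraction against this $6$--form identifies complex vector fields on $\text{SL}(2,{\mathbb C})$ with complex $5$--forms, so there is a unique complex vector field $V$ with
$$
d(\omega_h\wedge\omega_h)\,=\, \iota_V(\omega_h\wedge\omega_h\wedge\omega_h)\, .
$$
Since contraction is natural with respect to diffeomorphisms, and both $d(\omega_h\wedge\omega_h)$ and $\omega_h\wedge\omega_h\wedge\omega_h$ are preserved by the right--translation action of $\text{SL}(2,{\mathbb C})$ and by the left--translation action of $\text{SU}(2)$, the vector field $V$ has the same two invariance properties.

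Finally, in the right--translation trivialization $T\text{SL}(2,{\mathbb C})\,=\,\text{SL}(2,{\mathbb C})\times sl(2,{\mathbb C})$ of Section \ref{sec2}, a right--translation invariant complex vector field is a constant section, so $V$ is determined by $V(e)\,\in\, sl(2,{\mathbb C})\otimes_{\mathbb R}{\mathbb C}$, and invariance under the left--translation action of $\text{SU}(2)$ forces $V(e)$ to be fixed by the adjoint action of $\text{SU}(2)$ on $sl(2,{\mathbb C})\otimes_{\mathbb R}{\mathbb C}$. But, as recorded in the proof of Proposition \ref{prop1}, $(sl(2,{\mathbb C})\otimes_{\mathbb R}{\mathbb C})^{\text{SU}(2)}\,=\, sl(2,{\mathbb C})^{\text{SU}(2)}\otimes_{\mathbb R}{\mathbb C}\,=\,0$, so $V(e)\,=\,0$, whence $V\,=\,0$ and $d(\omega_h\wedge\omega_h)\,=\,0$. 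I expect no real obstacle here: the only step needing care is the transfer of both symmetries to $V$ (the right one through the trivialization, the left $\text{SU}(2)$ one through naturality of $\iota_V$). A slightly longer alternative route would be to write $d\omega_h\,=\,\partial\omega_h+\overline\partial\omega_h$ and verify $\omega_h\wedge\partial\omega_h\,=\,0\,=\,\omega_h\wedge\overline\partial\omega_h$ by the same kind of representation--theoretic computation of $\text{SU}(2)$--invariants at $e$.
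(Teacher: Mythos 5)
Your proof is correct and takes essentially the same route as the paper: both reduce, via right--translation invariance, to the value of the invariant $5$--form $d(\omega_h\wedge\omega_h)$ at $e$, dualize against a nonzero top form to view it as an element of $sl(2,{\mathbb C})\otimes_{\mathbb R}{\mathbb C}$, and conclude from $(sl(2,{\mathbb C})\otimes_{\mathbb R}{\mathbb C})^{{\rm SU}(2)}\,=\,0$. Your contraction of $\omega_h\wedge\omega_h\wedge\omega_h$ with a vector field is just a global repackaging of the paper's choice of a nonzero element of $\bigwedge^6(sl(2,{\mathbb C})\otimes_{\mathbb R}{\mathbb C})$ and the resulting equivariant isomorphism.
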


\begin{proof}
Using the identification between $T_e\text{SL}(2,{\mathbb C})$
and $sl(2,{\mathbb C})$, the evaluation of
the $5$--form $d(\omega^2_h)$ at $e$ is an element of
$\bigwedge^5 (sl(2,{\mathbb C})\bigotimes_{\mathbb R}{\mathbb C})^*$;
as in the proof of Proposition \ref{prop1}, we identify 
$(T^{0,1}_e\text{SL}(2,{\mathbb C}))^*$ with
$(T^{1,0}_e\text{SL}(2,{\mathbb C}))^*$ by sending any $u$ to
$\overline{u}$.

As in the proof of Proposition \ref{prop2}, fixing a nonzero
element of $\bigwedge^6 (sl(2,{\mathbb C})\bigotimes_{\mathbb 
R}{\mathbb 
C})$, we get an $\text{SL}(2,{\mathbb C})$--equivariant isomorphism of 
$\bigwedge^5 (sl(2,{\mathbb C})\bigotimes_{\mathbb R}{\mathbb C})^*$ with
$sl(2,{\mathbb C})\bigotimes_{\mathbb R}{\mathbb C}$. Using this
isomorphism, we have
\begin{equation}\label{el}
(d(\omega_h\wedge \omega_h))(e) \,\in\, sl(2,{\mathbb 
C})\otimes_{\mathbb R}
{\mathbb C}\, .
\end{equation}

As noted in the proof of Proposition \ref{prop2}, the 
K\"ahler form $\omega_h$ is preserved by the left--translation action 
of $\text{SU}(2)$ on $\text{SL}(2,{\mathbb C})$. Consequently,
the $5$--form $d(\omega^2_h)$ is preserved by the left--translation 
action of $\text{SU}(2)$ on $\text{SL}(2,{\mathbb C})$. This implies 
that the element $(d(\omega^2_h))(e)$ in \eqref{el}
is fixed by the adjoint action of $\text{SU}(2)$ on 
$sl(2,{\mathbb C})\bigotimes_{\mathbb R}{\mathbb C}$. From
this it follows that $(d(\omega^2_h))(e)\,=\, 0$, because
$(sl(2,{\mathbb C})\bigotimes_{\mathbb R}{\mathbb 
C})^{\text{SU}(2)}\,=\, sl(2,{\mathbb C})^{\text{SU}(2)} 
\bigotimes_{\mathbb R}{\mathbb C}\,=\, 0$.
Since $d\omega^2_h$ is invariant under the
right--translation action of $\text{SL}(2,{\mathbb C})$ on itself,
and $(d(\omega^2_h))(e)\,=\, 0$, we conclude that $d(\omega^2_h)\,=\,0$.
\end{proof}

As before, $T\text{SL}(2,{\mathbb C})$ is the holomorphic tangent bundle
of $T\text{SL}(2,{\mathbb C})$. Let $\nabla^h$ denote the Chern connection
on $T\text{SL}(2,{\mathbb C})$ corresponding to the Hermitian structure $h$
on $\text{SL}(2,{\mathbb C})$. The torsion of the connection $\nabla^h$
of $T\text{SL}(2,{\mathbb C})$ will be denoted by ${\mathcal T}(\nabla^h)$;
it is a $C^\infty$ section of $\Omega^{2,0}_{\text{SL}(2,{\mathbb C})}\otimes
(T\text{SL}(2,{\mathbb C}))$.

Consider the Hermitian structure $h$ on $T\text{SL}(2,{\mathbb C})$. It produces
a $C^\infty$ isomorphism
$$
h'\, :\, \Omega^{1,0}_{\text{SL}(2,{\mathbb C})}\, \longrightarrow\,
T\text{SL}(2,{\mathbb C})
$$
defined by $h(h'(w)\, ,v)\,=\, w(v)$ for $w\,\in\,
(\Omega^{1,0}_{\text{SL}(2,{\mathbb C})})_x$, $v\, \in\,T_x\text{SL}(2,{\mathbb C})$
and $x\, \in\, \text{SL}(2,{\mathbb C})$. We note that $h'$ is a conjugate
linear isomorphism. Using the isomorphism $h'$, the torsion
${\mathcal T}(\nabla^h)$ is a $C^\infty$ section of
$(\bigwedge^2(T\text{SL}(2,{\mathbb C})))\bigotimes (T\text{SL}(2,{\mathbb C}))$.

\begin{proposition}\label{prop-n1}
The torsion ${\mathcal T}(\nabla^h)\,\in\, C^\infty({\rm SL}(2,{\mathbb C}),
\, (\bigwedge\nolimits^2(T{\rm SL}(2,{\mathbb C})))\bigotimes
(T{\rm SL}(2,{\mathbb C})))$ lies in the subspace
$$
C^\infty({\rm SL}(2,{\mathbb C}),\,
\bigwedge\nolimits^3 (T{\rm SL}(2,{\mathbb C}))) \,\subset\,
C^\infty({\rm SL}(2,{\mathbb C}),\, (\bigwedge\nolimits^2
(T{\rm SL}(2,{\mathbb C})))\otimes (T{\rm SL}(2,{\mathbb C})))\, .
$$
In other words, the torsion is totally skew--symmetric.

The holonomy of the connection $\nabla^h$ lies in ${\rm SU}(3)$.
\end{proposition}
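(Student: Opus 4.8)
The plan is to compute the Chern connection $\nabla^h$ explicitly in the given trivialization and then read off both assertions, the second one being essentially immediate and the first one reducing to a short $\text{SU}(2)$--invariance argument in the spirit of the proofs of Propositions \ref{prop1} and \ref{prop2} and Lemma \ref{lem1}.

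First I would note that the right--translation invariant vector fields on $\text{SL}(2,{\mathbb C})$ are \emph{holomorphic}: they are the fundamental vector fields of the holomorphic left--translation action of $\text{SL}(2,{\mathbb C})$ on itself. Hence the identification $T\text{SL}(2,{\mathbb C})\,\cong\,\text{SL}(2,{\mathbb C})\times sl(2,{\mathbb C})$ is a holomorphic trivialization, and, since $h$ is right--translation invariant, the Hermitian matrix representing $h$ in this global holomorphic frame is the \emph{constant} matrix $h_0$. The connection form of a Chern connection in a holomorphic frame vanishes whenever the representing Hermitian matrix is constant, so $\nabla^h$ is the flat product connection in the right--invariant frame. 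In particular $\nabla^h$ is flat and the globally defined right--invariant frame is $\nabla^h$--parallel, so the holonomy of $\nabla^h$ is trivial and hence, \emph{a fortiori}, contained in $\text{SU}(3)$; this proves the second assertion.

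For the torsion, let $X_A$ denote the right--invariant vector field with $X_A(e)\,=\,A$. Using $\nabla^h\,=\,d$ in this frame together with $[X_A\, ,X_B]\,=\,-X_{[A,B]}$, one gets ${\mathcal T}(\nabla^h)(X_A\, ,X_B)\,=\,-[X_A\, ,X_B]\,=\,X_{[A,B]}$, so the value of ${\mathcal T}(\nabla^h)$ at $e$ is the Lie bracket $A\wedge B\,\longmapsto\,[A\, ,B]$, regarded as an element of $\bigwedge^2 sl(2,{\mathbb C})^*\otimes sl(2,{\mathbb C})$; applying $h'$ to the first factor yields an element $\widetilde{\mathcal T}_0\,\in\,\bigwedge^2 sl(2,{\mathbb C})\otimes sl(2,{\mathbb C})$. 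Since $h$ is invariant under right translations by $\text{SL}(2,{\mathbb C})$ and under left translations by $\text{SU}(2)$, the connection $\nabla^h$, its torsion, and the $h'$--transform of the torsion share these invariances; therefore, exactly as in the proofs of Propositions \ref{prop1} and \ref{prop2}, $\widetilde{\mathcal T}_0$ is fixed by the adjoint action of $\text{SU}(2)$ on $\bigwedge^2 sl(2,{\mathbb C})\otimes sl(2,{\mathbb C})$ (here one uses that $h_0$, and hence $h'$, is $\text{Ad}(\text{SU}(2))$--equivariant, as recorded at the start of Section \ref{sec2}). It then remains to check the representation--theoretic identity $$\Bigl(\,\textstyle\bigwedge^2 sl(2,{\mathbb C})\otimes sl(2,{\mathbb C})\,\Bigr)^{\text{SU}(2)}\ =\ \textstyle\bigwedge^3 sl(2,{\mathbb C})\,,$$ which will give $\widetilde{\mathcal T}_0\,\in\,\bigwedge^3 sl(2,{\mathbb C})$ and hence, $\mathcal T(\nabla^h)$ being right--translation invariant, the total skew--symmetry of the torsion everywhere.

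This last identity is elementary: as a complex representation of $\text{SU}(2)$, $sl(2,{\mathbb C})$ is the $3$--dimensional irreducible representation (the complexified adjoint representation), so $\bigwedge^2 sl(2,{\mathbb C})\,\cong\, sl(2,{\mathbb C})$ and $\bigwedge^2 sl(2,{\mathbb C})\otimes sl(2,{\mathbb C})$ decomposes as the sum of the spin--$2$, spin--$1$ and spin--$0$ irreducibles; the space of $\text{SU}(2)$--invariants is the one--dimensional spin--$0$ summand, which coincides with the image of the natural inclusion $\bigwedge^3 sl(2,{\mathbb C})\hookrightarrow\bigwedge^2 sl(2,{\mathbb C})\otimes sl(2,{\mathbb C})$. (One can also verify this by hand in the Pauli basis $\sigma_1,\sigma_2,\sigma_3$ of $sl(2,{\mathbb C})$, where $h_0(\sigma_a,\sigma_b)\,=\,2\delta_{ab}$ and $[\sigma_a,\sigma_b]\,=\,2\sqrt{-1}\,\epsilon_{abc}\sigma_c$, so that after lowering the index with $h_0$ the torsion becomes a constant multiple of $\epsilon_{abc}$, i.e. of $\sigma_1\wedge\sigma_2\wedge\sigma_3$.) The one point that genuinely requires care, rather than constituting a real obstacle, is the bookkeeping around the conjugate--linear isomorphism $h'$: one must make sure that $h'$, being built from the $\text{Ad}(\text{SU}(2))$--invariant form $h_0$, intertwines the adjoint actions of $\text{SU}(2)$ on $sl(2,{\mathbb C})^*$ and on $sl(2,{\mathbb C})$, so that the $\text{SU}(2)$--invariance does pass from $\mathcal T(\nabla^h)(e)$ to $\widetilde{\mathcal T}_0$. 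Everything else is the standard description of the Chern connection of a constant metric in a holomorphic frame together with the Clebsch--Gordan decomposition for $\text{SU}(2)$.
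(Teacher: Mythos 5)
Your proposal is correct, and it takes a partly different route from the paper, most notably for the holonomy statement. The paper never identifies the connection explicitly: for the torsion it evaluates ${\mathcal T}(\nabla^h)$ at $e$, invokes left--$\text{SU}(2)$ invariance of $h$, and then runs exactly the Clebsch--Gordan computation you give, namely $(\bigwedge\nolimits^2 \text{Sym}^2(V_0)\otimes\text{Sym}^2(V_0))^{\text{SU}(2)}=\text{Sym}^0(V_0)=\bigwedge\nolimits^3\text{Sym}^2(V_0)$; and for the holonomy it only argues that the induced Hermitian structure on the trivial line bundle $\bigwedge\nolimits^3 T{\rm SL}(2,{\mathbb C})$ is constant, so the induced connection on the determinant is trivial and the holonomy of $\nabla^h$ sits in ${\rm SU}(3)\subset{\rm U}(3)$. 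You instead observe that the right--invariant frame is a holomorphic frame in which $h$ is the constant matrix $h_0$, so the Chern connection form $H^{-1}\partial H$ vanishes: $\nabla^h$ is the flat connection with parallel right--invariant frame. This is correct (it is the classical picture for complex parallelizable manifolds) and yields a strictly stronger conclusion -- trivial holonomy and $R(\nabla^h)=0$ -- from which the paper's ${\rm SU}(3)$ statement is immediate; it also gives the torsion explicitly as the Lie bracket tensor $A\wedge B\mapsto[A,B]$, after which total skew--symmetry follows either by the same invariant-theory identity as in the paper or by your direct Pauli-basis check that lowering an index with $h_0$ produces a multiple of $\epsilon_{abc}$. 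What your approach buys is concreteness and extra information: flatness of $\nabla^h$ would in fact trivialize the curvature conditions \eqref{st5} and the ${\rm trace}(R\wedge R)$ term treated at some length in the proof of Theorem \ref{thm1}. What the paper's softer argument buys is that it uses only the invariance of $h$, without committing to the explicit form of the connection, and so the determinant-bundle argument for ${\rm SU}(3)$ holonomy would survive in situations where the metric is not constant in a holomorphic frame. Your one delicate point -- that the conjugate-linear $h'$ is $\text{Ad}(\text{SU}(2))$-equivariant so invariance passes to $\widetilde{\mathcal T}_0$ -- is handled correctly.
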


\begin{proof}
Consider the element
\begin{equation}\label{an1}
{\mathcal T}(\nabla^h)(e)\,\in\, (\bigwedge\nolimits^2 sl(2,{\mathbb C}))
\otimes sl(2,{\mathbb C})\, ,
\end{equation}
where $e\, \in\, {\rm SL}(2,{\mathbb C})$ is the identity element.
It is invariant under the adjoint action of $\text{SU}(2)$ because the
Hermitian structure $h$ is preserved by the lest translation action of
$\text{SU}(2)$ on ${\rm SL}(2,{\mathbb C})$.

Let $V_0$ be the standard two dimensional representation of $\text{SU}(2)$.
The $\text{SU}(2)$--module $sl(2,{\mathbb C})$ is isomorphic to the symmetric
product $\text{Sym}^2(V_0)$.

Therefore, the $\text{SU}(2)$--module in \eqref{an1} is isomorphic to
$(\bigwedge\nolimits^2 \text{Sym}^2(V_0))\otimes \text{Sym}^2(V_0)$. But
$$
\bigwedge\nolimits^2 \text{Sym}^2(V_0)\,=\, \text{Sym}^2(V_0)
$$
(see \cite[p. 160, Ex. 11.35]{FH}), and 
$$
\text{Sym}^2(V_0)\otimes \text{Sym}^2(V_0)\,=\, \text{Sym}^4(V_0)\oplus
\text{Sym}^2(V_0) \oplus \text{Sym}^0(V_0)
$$
(see \cite[p. 151, Ex. 11.11]{FH}). Consequently,
$$
((\bigwedge\nolimits^2 \text{Sym}^2(V_0))\otimes \text{Sym}^2(V_0))^{\text{SU}(2)}
\,=\, \text{Sym}^0(V_0)\,=\, \bigwedge\nolimits^3 \text{Sym}^2(V_0)\, .
$$
Consequently, ${\mathcal T}(\nabla^h)$ is a section of
$\bigwedge\nolimits^3 (T{\rm SL}(2,{\mathbb C}))$.
This proves the first part of the proposition.

To prove the second part of the proposition, consider the Hermitian structure on
the trivial holomorphic line bundle $\bigwedge\nolimits^3 (T{\rm SL}(2,{\mathbb C}))$
induced by $h$. It is a constant Hermitian structure on the
trivial holomorphic line bundle. Hence the holonomy of the connection on
$\bigwedge\nolimits^3 (T{\rm SL}(2,{\mathbb C}))$ induced by $\nabla^h$ is
trivial. Consequently, the holonomy of the connection $\nabla^h$ lies in
the subgroup ${\rm SU}(3)\, \subset\, \text{U}(3)$.
\end{proof}

\section{A class of solutions of the Strominger 
system}\label{se4}

Let
\begin{equation}\label{Ga}
\Gamma\, \subset\, \text{SL}(2,{\mathbb C})
\end{equation}
be a cocompact lattice, meaning $\Gamma$ is a closed discrete subgroup
of $\text{SL}(2,{\mathbb C})$ such that the quotient
\begin{equation}\label{M}
M\, :=\, \text{SL}(2,{\mathbb C})/\Gamma
\end{equation}
is compact. We note that $M$ is {\it not} a K\"ahler manifold.

Since the Hermitian structure $h$ on $\text{SL}(2,{\mathbb C})$
constructed in Section \ref{sec2} is invariant under the
right--translation action of $\text{SL}(2,{\mathbb C})$ on itself, we
conclude that $h$ defines a Hermitian structure on $M$. Let 
$\widehat{h}$ denote the Hermitian structure on
$M$ given by $h$. Note that the pullback of 
$\widehat{h}$ by the quotient map $\text{SL}(2,{\mathbb 
C})\,\longrightarrow\, M$ coincides with $h$. Let
\begin{equation}\label{om}
\omega\, \in\, C^\infty(M, \, \Omega^{1,1}_M)
\end{equation}
be the K\"ahler form on $M$ associated to $\widehat{h}$. Let
\begin{equation}\label{cc}
\nabla^\omega
\end{equation}
be the Chern connection on $TM$ associated to $\omega$.

\begin{corollary}\label{cor1}
The differential form $\omega$ in \eqref{om} satisfies the identity
$$
d(\omega^2)\, =\, 0\, .
$$
\end{corollary}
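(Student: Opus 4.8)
The plan is to deduce Corollary \ref{cor1} directly from Lemma \ref{lem1} by descent along the quotient map. Let $q\, :\, \text{SL}(2,{\mathbb C})\,\longrightarrow\, M$ be the natural projection, which is a local biholomorphism since $\Gamma$ acts freely and properly discontinuously. The first step is to record the compatibility $q^*\omega\,=\,\omega_h$, which is exactly the statement (made just before the corollary) that the pullback of $\widehat h$ by $q$ coincides with $h$, together with the functoriality of the construction of the K\"ahler form from a Hermitian structure.

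Next I would use that pullback of forms commutes with both the wedge product and the exterior derivative, so
$$
q^*\bigl(d(\omega^2)\bigr)\,=\, d\bigl((q^*\omega)^2\bigr)\,=\, d(\omega_h\wedge\omega_h)\,=\,0\, ,
$$
where the last equality is Lemma \ref{lem1}. Finally, since $q$ is a surjective submersion (indeed a local diffeomorphism), the induced map $q^*$ on differential forms is injective; hence $q^*\bigl(d(\omega^2)\bigr)\,=\,0$ forces $d(\omega^2)\,=\,0$ on $M$. This completes the argument.

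There is essentially no obstacle here: the entire content has been front-loaded into Lemma \ref{lem1}, and the corollary is a routine descent statement. The only point requiring a word of care is the injectivity of $q^*$, which holds because $q$ is a local diffeomorphism, so every point of $M$ and every tangent vector there is hit; a form that pulls back to zero must therefore vanish. Everything else is the formal naturality of $d$ and $\wedge$ under pullback.
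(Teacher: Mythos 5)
Your argument is correct and is essentially the paper's own proof: the paper also deduces the corollary from Lemma \ref{lem1} by noting that the pullback of $\omega$ under the quotient map equals $\omega_h$, and you have merely spelled out the routine details (naturality of $d$ and $\wedge$ under pullback, injectivity of $q^*$ for the local diffeomorphism $q$) that the paper leaves implicit.
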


\begin{proof}
Since the pullback of $\omega$ to $\text{SL}(2,{\mathbb C})$, by
the quotient map $\text{SL}(2,{\mathbb C})\, \longrightarrow\, M$,
coincides with $\omega_h$, from Lemma \ref{lem1} it follows that
$d(\omega^2)\, =\, 0$.
\end{proof}

For any torsionfree coherent analytic sheaf $F$ on $M$, let
$\det (F)$ be the determinant line bundle on $M$; see
\cite[Ch.~V, \S~6]{Ko} for the construction of the determinant bundle.
Define the \textit{degree} of $F$ to be
\begin{equation}\label{de}
{\rm degree}(F)\, :=\, \int_M \alpha(F)\wedge \omega\wedge \omega
\, \in\, {\mathbb R}\, ,
\end{equation}
where $\alpha(F)$ is any $2$--form on $M$ representing the
first Chern class $c_1(\det (F))\, \in\, H^2(M,\, {\mathbb R})$.

\begin{lemma}\label{lem2}
The degree is well defined.
\end{lemma}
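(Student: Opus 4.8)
The plan is to check that the right-hand side of \eqref{de} is unchanged when $\alpha(F)$ is replaced by any other closed $2$--form representing the same de Rham class. So let $\alpha$ and $\alpha'$ be two smooth real closed $2$--forms on $M$ with $[\alpha]\,=\,[\alpha']\,=\,c_1(\det(F))$ in $H^2(M,\,{\mathbb R})$. Then $\alpha-\alpha'\,=\,d\beta$ for some smooth real $1$--form $\beta$ on $M$, and it suffices to prove that $\int_M d\beta\wedge\omega\wedge\omega\,=\,0$.

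For this I would invoke the Leibniz rule
\[
d(\beta\wedge\omega\wedge\omega)\,=\, d\beta\wedge\omega\wedge\omega - \beta\wedge d(\omega\wedge\omega)\, .
\]
By Corollary \ref{cor1} we have $d(\omega\wedge\omega)\,=\,0$, so the last term vanishes and hence $d(\beta\wedge\omega\wedge\omega)\,=\,d\beta\wedge\omega\wedge\omega$. Since $M$ is compact without boundary, Stokes' theorem gives $\int_M d(\beta\wedge\omega\wedge\omega)\,=\,0$, and therefore $\int_M d\beta\wedge\omega\wedge\omega\,=\,0$, as required. This shows that ${\rm degree}(F)$ is independent of the chosen representative $\alpha(F)$; it is indeed a real number since $\omega$ is a real form and $c_1(\det(F))$ admits a real representative.

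The only nontrivial input is the closedness of $\omega\wedge\omega$, which is precisely Corollary \ref{cor1} (itself a consequence of Lemma \ref{lem1}); once that is available the argument is a routine application of Stokes' theorem, so I do not anticipate any genuine obstacle. It may be worth recording explicitly that the construction of $\det(F)$ for a torsionfree coherent analytic sheaf (as in \cite[Ch.~V, \S~6]{Ko}) produces a bona fide holomorphic line bundle on $M$, so that $c_1(\det(F))\,\in\, H^2(M,\,{\mathbb R})$ is defined and the above computation applies verbatim.
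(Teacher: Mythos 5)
Your argument is correct and is essentially the paper's own proof: both write the difference of representatives as $d\beta$ and use $d(\omega\wedge\omega)=0$ (Corollary \ref{cor1}) together with Stokes' theorem to conclude the integral is unchanged. The paper merely compresses your Leibniz-plus-Stokes step into the single identity $\int_M (d\delta)\wedge\omega^2=\int_M \delta\wedge d(\omega^2)$.
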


\begin{proof}
Let $\alpha$ and $\beta$ be two $2$--forms on $M$ representing
$c_1(\det (F))$. So, $\alpha-\beta\,=\, d\delta$, where $\delta$ is
a smooth $1$--form on $M$. Now,
$$
\int_M \alpha\wedge \omega^2 - \int_M \beta\wedge \omega^2
\,=\, \int_M (\alpha -\beta) \wedge \omega^2
\,=\, \int_M (d\delta)\wedge \omega^2\,=\,
\int_M \delta\wedge d(\omega^2)\, =\, 0
$$
be Corollary \ref{cor1}. So, $\int_M \alpha\wedge \omega^2 \,=\, \int_M 
\beta\wedge \omega^2$. Hence the degree is independent of the choice
of the differential form representing the first Chern class.
\end{proof}

Since the connection $\nabla^\omega$ is the descent of the connection $\nabla^h$
considered in Proposition \ref{prop-n1},
the following corollary is an immediate consequence of Proposition \ref{prop-n1}.

\begin{corollary}\label{cor-n1}
The torsion of the connection $\nabla^\omega$ is
a $C^\infty$ section of $\bigwedge\nolimits^3 TM$; in other words,
the torsion is totally skew--symmetric.

The holonomy of the connection $\nabla^\omega$ lies in ${\rm SU}(3)$.
\end{corollary}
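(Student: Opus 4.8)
The plan is to deduce Corollary \ref{cor-n1} from Proposition \ref{prop-n1} by a descent argument. The key observation is that the quotient map $q\, :\, \text{SL}(2,{\mathbb C})\, \longrightarrow\, M$ is a holomorphic covering map (since $\Gamma$ acts freely, being a discrete subgroup acting by right translations), and the Hermitian structure $\widehat{h}$ on $M$ pulls back under $q$ to $h$. Since the Chern connection is canonically associated to a Hermitian structure on the holomorphic tangent bundle, and the formation of the Chern connection is local, $\nabla^h$ is the pullback of $\nabla^\omega$; equivalently, $\nabla^\omega$ is the descent of $\nabla^h$. This is precisely the sentence preceding the statement, so I may invoke it directly.

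First I would note that the torsion ${\mathcal T}(\nabla^\omega)$ of $\nabla^\omega$ pulls back under $q$ to the torsion ${\mathcal T}(\nabla^h)$ of $\nabla^h$, because torsion is a pointwise tensorial quantity computed locally from the connection, and $q$ is a local biholomorphism. Likewise, the isomorphism $h'$ used to regard torsion as a section of $(\bigwedge^2 TM)\otimes TM$ is itself local and compatible with pullback. By Proposition \ref{prop-n1}, ${\mathcal T}(\nabla^h)$ is a section of $\bigwedge^3(T\text{SL}(2,{\mathbb C}))\,\subset\, (\bigwedge^2(T\text{SL}(2,{\mathbb C})))\otimes(T\text{SL}(2,{\mathbb C}))$. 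Since $q^*$ commutes with wedge powers and the total skew-symmetrization is a natural (functorial) operation on tensors, the pullback of ${\mathcal T}(\nabla^\omega)$ being a section of $q^*\bigwedge^3 TM$ forces ${\mathcal T}(\nabla^\omega)$ itself to be a section of $\bigwedge^3 TM$; this uses that $q$ is surjective and a local diffeomorphism, so the vanishing of the non-skew part upstairs implies its vanishing downstairs. This gives the first assertion.

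For the holonomy statement, I would argue that the holonomy group of $\nabla^\omega$ about a point $x\in M$ is contained in the holonomy group that would be computed using only loops that lift to loops in $\text{SL}(2,{\mathbb C})$ — but more directly, since $\nabla^\omega$ is a metric (Chern) connection, its holonomy already lies in $\text{U}(3)$, and it suffices to show the induced connection on $\bigwedge^3 TM\,=\, K_M^{-1}$ has trivial holonomy, i.e. that the determinant of the holonomy is trivial. The Hermitian structure $h$ induces a constant (flat) Hermitian metric on the trivial holomorphic line bundle $\bigwedge^3(T\text{SL}(2,{\mathbb C}))$, as noted in the proof of Proposition \ref{prop-n1}; this descends to a flat Hermitian metric on $\bigwedge^3 TM$, whose associated Chern connection is flat with trivial holonomy (a flat Hermitian metric on a trivial holomorphic line bundle has trivial Chern connection). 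Therefore the connection on $\bigwedge^3 TM$ induced by $\nabla^\omega$ has trivial holonomy, so $\det$ of the holonomy of $\nabla^\omega$ is trivial, placing the holonomy in $\text{SU}(3)$.

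I do not expect any serious obstacle here: the corollary is a formal consequence of Proposition \ref{prop-n1} together with the fact that $q$ is a holomorphic covering and $\nabla^\omega$ descends from $\nabla^h$. The only point requiring a word of care is the direction of the descent argument for torsion — that a local property established upstairs descends downstairs — which holds because $q$ is a surjective local biholomorphism, so every point and every tangent vector on $M$ is hit. Accordingly the proof is essentially one line invoking Proposition \ref{prop-n1} and the preceding remark that $\nabla^\omega$ is the descent of $\nabla^h$, exactly as the paper states.
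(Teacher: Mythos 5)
Your proposal is correct and follows exactly the paper's route: the paper gives no separate proof, simply declaring the corollary an immediate consequence of Proposition \ref{prop-n1} via the fact that $\nabla^\omega$ is the descent of $\nabla^h$ under the covering $q$, which is what you spell out. You also rightly handle the one delicate point—that holonomy need not descend directly under a covering—by rerunning the determinant line bundle argument on $M$ itself, which is consistent with (indeed implicit in) the paper's argument.
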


We note that the torsion of the connection $\nabla^\omega$ is nonzero because
$M$ is not K\"ahler.

We choose $\Gamma$ such that there are irreducible unitary 
representations of $\Gamma$.

\begin{remark}\label{rem-ex}
{\rm There are many examples of such $\Gamma$; see \cite[p. 3393,
Theorem 2.1]{La}. Note that any free nonabelian group has irreducible
unitary representations in ${\rm U}(n)$ for all $n\, \geq\, 2$.
To see this, take any two elements $g_1$ and $g_2$ of ${\rm SU}(n)$ such 
that $g_1g_2g^{-1}_1g^{-1}_2$ is a generator of the center of
${\rm SU}(n)$. The subgroup of ${\rm U}(n)$ generated by
$g_1$ and $g_2$ is irreducible.}
\end{remark}

Let
\begin{equation}\label{rho}
\rho\, :\, \Gamma\, \longrightarrow\, \text{U}(n)
\end{equation}
be an irreducible representation; this means that the only
linear subspaces of ${\mathbb C}^n$ left invariant by the
action of $\rho(\Gamma)$ are $0$ and ${\mathbb C}^n$. Let
\begin{equation}\label{en}
(E\, ,\nabla)\,\longrightarrow\, M
\end{equation}
be the
unitary flat vector bundle over $M$ given by $\rho$. We briefly
recall the constructions of the vector bundle $E$ and the connection 
$\nabla$ on it. Consider the trivial vector bundle
$\text{SL}(2,{\mathbb C})\times {\mathbb C}^n$ on $\text{SL}(2,{\mathbb 
C})$; it has the trivial connection. This trivial connection is unitary
with respect to the standard inner product on ${\mathbb C}^n$. The
group $\Gamma$ acts on $\text{SL}(2,{\mathbb C})$ as 
right--translations, and it acts on
${\mathbb C}^n$ as follows: the action of any $\gamma\, \in\, \Gamma$
sends any $v\, \in\, {\mathbb C}^n$ to $\rho(\gamma^{-1})(v)$. Consider
the diagonal action of $\Gamma$ on $\text{SL}(2,{\mathbb C})\times 
{\mathbb C}^n$ constructed using these two actions. Let
$(\text{SL}(2,{\mathbb C})\times {\mathbb C}^n)/\Gamma$ be the
quotient for this action. The natural map
$$
(\text{SL}(2,{\mathbb C})\times {\mathbb C}^n)/\Gamma\,\longrightarrow
\,\text{SL}(2,{\mathbb C})/\Gamma\, = \, M
$$
is a vector bundle, which we will denote by $E$. The trivial connection
on the vector bundle $\text{SL}(2,{\mathbb C})\times {\mathbb C}^n
\,\longrightarrow\,\text{SL}(2,{\mathbb C})$ descends to a flat
unitary connection on $E$; this descended connection on $E$ will be
denoted by $\nabla$.

A holomorphic vector bundle $F$ of positive rank on $M$ is called 
\textit{stable} if for every nonzero coherent analytic subsheaf $V\, 
\subset\, F$ with
$\text{rank}(V)\, <\, \text{rank}(F)$, the inequality
$$
\frac{\text{degree}(V)}{\text{rank}(V)}\, <\,
\frac{\text{degree}(F)}{\text{rank}(F)}
$$
holds, where degree is defined in \eqref{de} (and Lemma \ref{lem2}).

\begin{proposition}\label{prop3}
The holomorphic vector bundle $E$ over $M$ in \eqref{en} is stable.
\end{proposition}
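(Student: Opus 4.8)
The plan is to show that $E$ is \emph{polystable} (it is a direct sum of stable bundles because $\rho$ is irreducible, so in fact $E$ itself is stable), and then promote polystability to stability using irreducibility of $\rho$. The key geometric input is that the flat unitary connection $\nabla$ is compatible with a Hermitian metric, so that $E$ carries a Hermitian--Einstein structure in a suitable sense; the subtlety here is that $M$ is non-K\"ahler, so one cannot quote the Donaldson--Uhlenbeck--Yau theorem directly, and one must argue with the Gauduchon-type metric $\omega$ provided by Corollary \ref{cor1} (the condition $d(\omega^2)=0$ is precisely what makes the degree in \eqref{de} well defined, as in Lemma \ref{lem2}).

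First I would observe that the flat unitary connection $\nabla$ on $E$ has curvature zero, hence in particular its mean curvature (contraction of $F_\nabla$ with $\omega$) vanishes; thus $E$ is an $\omega$-Hermitian--Einstein bundle with Einstein constant $0$, and consequently $\mathrm{degree}(E)=0$. Next, for any nonzero coherent analytic subsheaf $V\subset E$ with $0<\mathrm{rank}(V)<\mathrm{rank}(E)$, I would invoke the standard Chern--Weil inequality for a Hermitian--Einstein bundle on a manifold with a metric satisfying $d(\omega^2)=0$: any coherent subsheaf $V$ satisfies $\mathrm{degree}(V)/\mathrm{rank}(V)\leq \mathrm{degree}(E)/\mathrm{rank}(E)=0$, with equality if and only if $V$ is a subbundle (away from a complex-codimension-$\geq 2$ subset) whose inclusion is parallel with respect to $\nabla$, so that the fiber of $V$ defines a $\rho(\Gamma)$-invariant linear subspace of $\mathbb{C}^n$. (This is the Simpson/L\"ubke--Teleman style argument; on a compact complex manifold with a Gauduchon metric the second fundamental form computation of Chern--Weil goes through verbatim, since the proof only uses $d(\omega^{n-1})=0$.) Since $\rho$ is irreducible, no such proper invariant subspace exists, so the equality case is excluded and we get the strict inequality $\mathrm{degree}(V)/\mathrm{rank}(V)<0=\mathrm{degree}(E)/\mathrm{rank}(E)$, which is exactly the stability condition.

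The main obstacle is making the Chern--Weil inequality precise in the non-K\"ahler, possibly singular-subsheaf setting: one must handle a coherent subsheaf rather than a subbundle, which requires either restricting to the locus where $V$ is a subbundle and controlling the behaviour of the metric-induced quantities across the bad set (of complex codimension $\geq 2$), or passing to the reflexive hull and using that degree is unchanged. I would cite the known extension of the Hermitian--Einstein/stability correspondence to Gauduchon metrics on compact complex manifolds (e.g. L\"ubke--Teleman, or the treatment in \cite{Ko}) for the inequality and its equality case, and then the only genuinely new point—trivial here—is that the equality case forces a $\rho(\Gamma)$-invariant subspace, contradicting the hypothesis in \eqref{rho}. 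Hence $E$ is stable.
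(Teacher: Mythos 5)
Your proposal is correct and follows essentially the same route as the paper: the paper also notes that flatness gives $c_1(E)=0$, hence $\mathrm{degree}(E)=0$ with respect to the Gauduchon-type degree made well defined by $d(\omega^2)=0$, and then invokes the proof of Proposition 8.2 in \cite[p.~176]{Ko} (irreducible Einstein--Hermitian bundles are stable), which is precisely the Chern--Weil/second-fundamental-form inequality with equality case forcing a parallel subsheaf, i.e.\ a $\rho(\Gamma)$-invariant subspace, that you spell out. The only difference is that you reproduce the details of that argument, whereas the paper cites it, noting the simplification coming from $\nabla$ being unitary flat.
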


\begin{proof}
Since the vector bundle $E$ admits a flat connection (recall that 
$\nabla$ is flat), we have $c_1(\det 
(E)) \,=\, c_1(E)\,=\, 0$. Hence $\text{degree}(E)\,=\, 0$.

Since the connection $\nabla$ in \eqref{en} is unitary flat and
irreducible, the proof of Proposition 8.2 in \cite[page 176]{Ko}
gives that $E$ is stable. In fact, the proof of Proposition 8.2 in 
\cite[page 176]{Ko}, which is for irreducible Einstein-Hermitian
bundles, gets simplified due to the stronger input that 
$\nabla$ is unitary flat.
\end{proof}

Let $\{A_0\, ,B_0\, ,C_0\}$ be the basis of $sl(2,{\mathbb C})$ 
defined by
$$
A_0\,=\,
\begin{pmatrix}
1 & 0\\
0 & -1
\end{pmatrix}\, , ~ B_0\,=\,
\begin{pmatrix}
0 & 1\\
0 & 0
\end{pmatrix}\, , ~ C_0\,=\,
\begin{pmatrix}
0 & 0\\
1 & 0
\end{pmatrix}\, .
$$
Then $A_0\bigwedge B_0\bigwedge C_0$ is a nonzero element of 
the line $\bigwedge^3 sl(2,{\mathbb C})$; we will call this 
element $\theta_0$. Note that the adjoint action of
$\text{SL}(2,{\mathbb C})$ on $\bigwedge^3 sl(2,{\mathbb C})$ preserves 
$\theta_0$, because the action of $\text{SL}(2,{\mathbb C})$ on 
$\bigwedge^3 sl(2,{\mathbb C})$ is trivial (the group 
$\text{SL}(2,{\mathbb C})$ does not have any nontrivial character).

The holomorphic tangent bundle $T\text{SL}(2,{\mathbb C})$ of
$\text{SL}(2,{\mathbb C})$ is 
identified with the trivial vector bundle $\text{SL}(2,{\mathbb 
C})\times sl(2,{\mathbb C})$ using right--translation invariant vector
fields. This identification produces a holomorphic isomorphism of the 
holomorphic tangent bundle $TM$, where $M$ is
constructed in \eqref{M}, with the trivial vector 
bundle $M\times sl(2,{\mathbb C})$. Using this isomorphism, the above 
element $\theta_0\, \in\, 
\bigwedge^3 sl(2,{\mathbb C})$ produces a trivialization of the 
canonical line bundle
$$
K_M\, :=\, \bigwedge\nolimits^3 \Omega^{3,0}_M\,=\,
(\bigwedge\nolimits^3 TM)^*\, .
$$
Let
\begin{equation}\label{theta}
\theta\, \in\, H^0(M,\, K_M)
\end{equation}
be the nowhere zero holomorphic section given by $\theta_0$.

\begin{theorem}\label{thm1}
Consider the sextuple $(M\, ,\theta\, ,\omega\, ,
\nabla^\omega\, , E\, ,\nabla)$
constructed in \eqref{M}, \eqref{theta}, \eqref{om}, \eqref{cc} and
\eqref{en}. It solves the Strominger system. Moreover, it
solves the equation of motion.
\end{theorem}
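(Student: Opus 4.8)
The plan is to verify each of the five equations \eqref{st1}--\eqref{st5} in turn, exploiting the homogeneity of $M = \mathrm{SL}(2,\mathbb C)/\Gamma$ together with the structural facts already established. First I would dispose of \eqref{st3}: this is exactly the content of Corollary \ref{cor1} once I observe that $\Vert\theta\Vert_\omega$ is a constant function on $M$. Indeed, $\theta$ comes from the fixed element $\theta_0 \in \bigwedge^3 sl(2,\mathbb C)$ via the right-translation-invariant trivialization of $\bigwedge^3 TM$, while the Hermitian metric $\widehat h$ is also right-translation invariant; hence the pointwise norm $\Vert\theta\Vert_\omega$ is $\mathrm{SL}(2,\mathbb C)$-invariant and therefore a constant $c_0 > 0$. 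So $d(\Vert\theta\Vert_\omega\cdot\omega^2) = c_0\, d(\omega^2) = 0$.

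Next I would handle \eqref{st2}. Since $\Vert\theta\Vert_\omega$ is constant, the right-hand side $\sqrt{-1}(\overline\partial - \partial)\Vert\theta\Vert_\omega$ vanishes, so what must be shown is $d^*\omega = 0$, i.e.\ $\omega$ is co-closed. Equivalently, since $\dim_{\mathbb C} M = 3$, one has $d^*\omega = -*\,d*\omega$ and $*\omega = \tfrac12\omega^2$ (up to the usual constant), so $d^*\omega = 0$ follows directly from $d(\omega^2) = 0$, which is Corollary \ref{cor1}. (Alternatively, invoke Proposition \ref{prop1}: the $1$-form $d^*\omega$ is built functorially from the right-invariant, $\mathrm{SU}(2)$-left-invariant data, hence is right-invariant and $\mathrm{SU}(2)$-left-invariant, hence zero.) This also shows the ``balanced'' reformulation is consistent with \eqref{st3}.

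For \eqref{st1}, the bundle $E$ carries the flat unitary connection $\nabla$, so its curvature is $F_\nabla = 0$; in particular $F_\nabla^{2,0} = F_\nabla^{0,2} = 0$ and $F_\nabla\wedge\omega^2 = 0$ trivially. (The Chern connection of the flat unitary metric on the holomorphic bundle $E$ coincides with $\nabla$ since $\nabla$ is both unitary and compatible with the holomorphic structure, being flat of type $(1,1)$.) For the equation of motion \eqref{st5}: by Corollary \ref{cor-n1} the holonomy of $\nabla^\omega$ lies in $\mathrm{SU}(3)$, so the curvature $R$ is an $\mathfrak{su}(3)$-valued $2$-form. The statement $R^{2,0} = R^{0,2} = 0$ says $R$ is of type $(1,1)$, which holds for the Chern connection of any Hermitian holomorphic vector bundle; and $R\wedge\omega^2 = 0$ is the statement that $\nabla^\omega$ is $\omega$-Hermitian-Einstein with vanishing Einstein constant, i.e.\ $\Lambda_\omega R = 0$ where $\Lambda_\omega$ is contraction with $\omega$. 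I expect the cleanest route here is again the invariance argument: $\Lambda_\omega R$ is a right-invariant, $\mathrm{SU}(2)$-left-invariant section of $\mathrm{End}(TM)$-valued $0$-forms, equivalently (at $e$) an $\mathrm{SU}(2)$-invariant element of $sl(2,\mathbb C)^{\otimes 2}$-type data; tracking through the representation-theoretic decomposition $\mathrm{Sym}^2(V_0)\otimes\mathrm{Sym}^2(V_0) = \mathrm{Sym}^4 \oplus \mathrm{Sym}^2 \oplus \mathrm{Sym}^0$ used in Proposition \ref{prop-n1}, one identifies the invariant piece and checks it is annihilated, or one computes $\Lambda_\omega R$ directly from the (constant) structure constants of $sl(2,\mathbb C)$ and the Chern connection formula. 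I anticipate this is the main computational obstacle, and I would reduce it to a finite linear-algebra check at the identity element.

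Finally, for the anomaly equation \eqref{st4}: with the flat bundle $(E,\nabla)$ we have $F_\nabla = 0$, so $\mathrm{trace}(F_\nabla\wedge F_\nabla) = 0$, and the equation becomes $\sqrt{-1}\,\partial\overline\partial\,\omega = \alpha'\,\mathrm{trace}(R\wedge R)$. Both sides are right-invariant, $\mathrm{SU}(2)$-left-invariant complex $4$-forms on $\mathrm{SL}(2,\mathbb C)$ (for the left side this uses that $\omega_h$ has these invariances; for the right side, $R$ is built from $\nabla^h$ which descends from the invariant data, so $\mathrm{trace}(R\wedge R)$ inherits the invariances), hence by Proposition \ref{prop2} each is a constant multiple of $\omega_h\wedge\omega_h$. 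Since $\sqrt{-1}\,\partial\overline\partial\,\omega_h \neq 0$ (as $M$ is not Kähler, $d\omega_h \neq 0$; one checks the $\partial\overline\partial$ part is the relevant nonzero component), the coefficient of $\omega_h^2$ on the left is a nonzero constant $\lambda_1$, and that on the right is some constant $\lambda_2$; then \eqref{st4} holds by choosing $\alpha' = \lambda_1/\lambda_2$, provided $\lambda_2 \neq 0$. The remaining point, which I would verify by the same identification-at-$e$ computation, is that $\mathrm{trace}(R\wedge R)$ is genuinely nonzero (so that $\lambda_2\neq0$ and $\alpha'$ is a well-defined complex number); this, together with the $\Lambda_\omega R = 0$ check, is where the real work lies, everything else being formal consequences of the homogeneity and the lemmas already proved.
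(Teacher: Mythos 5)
Your handling of \eqref{st1}, \eqref{st2}, \eqref{st3} and \eqref{st5} is sound, and where it deviates from the paper it is a legitimate alternative: the paper disposes of \eqref{st2} by the invariance argument of Proposition \ref{prop1} rather than by observing that $\Vert\theta\Vert_\omega$ is constant and that $d^*\omega=0$ follows from $d(\omega^2)=0$ (the balanced condition), it gets \eqref{st3} from Proposition \ref{prop2} together with Corollary \ref{cor1} rather than from constancy of the norm, and for \eqref{st5} it shows by invariance that $\star_\omega(R(\nabla^\omega)\wedge\omega^2)=\lambda\cdot\mathrm{Id}$ and then kills $\lambda$ by integrating $\mathrm{trace}(R(\nabla^\omega))\wedge\omega^2$ against $d(\omega^2)=0$, instead of your proposed pointwise check at $e$.

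The genuine gap is in \eqref{st4}, and it sits exactly at the step you deferred ``to a finite linear-algebra check at the identity'': the check fails. The holomorphic tangent bundle $TM$ is holomorphically trivialized by the descended right--translation invariant vector fields, and in this holomorphic frame the Gram matrix of $\widehat h$ is the \emph{constant} matrix $h_0$; since the Chern connection in a holomorphic frame has connection form $h^{-1}\partial h$, the connection $\nabla^\omega$ is the flat connection of this trivialization, so $R(\nabla^\omega)=0$ identically. (This is why your ``main computational obstacle'' $\Lambda_\omega R=0$ for \eqref{st5} is trivially true.) Consequently $\mathrm{trace}(R\wedge R)=0=\mathrm{trace}(F_\nabla\wedge F_\nabla)$, i.e.\ your $\lambda_2=0$, while $\lambda_1\neq 0$: writing $\rho^1,\rho^2,\rho^3$ for the components of the right Maurer--Cartan form in an $h_0$--orthogonal basis, one has $\sqrt{-1}\,\partial\overline\partial\,\omega=\tfrac12\sum_i\lambda_i\, d\rho^i\wedge\overline{d\rho^i}$ with $\lambda_i>0$, and for instance the coefficient of $\rho^2\wedge\rho^3\wedge\overline{\rho}^2\wedge\overline{\rho}^3$ is nonzero, so the left-hand side of \eqref{st4} does not vanish. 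Hence no choice of $\alpha'$ closes your argument for \eqref{st4} with the tangent-bundle connection $\nabla^\omega$. Note that the paper's own proof of \eqref{st4} merely observes that both sides are right-invariant, ${\rm SU}(2)$--invariant $4$--forms and invokes Proposition \ref{prop2}, without comparing the two proportionality constants; your more careful bookkeeping therefore does not overlook an argument available in the paper, but rather exposes that this step cannot be completed by the proposed (or the paper's) route as it stands — one would need a different metric-compatible connection $\nabla^T$ on $TM$ with $\mathrm{trace}(R\wedge R)$ a nonzero multiple of $\omega^2$.
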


\begin{proof}
Since $\nabla$ is flat, the equations in \eqref{st1} are satisfied.

The differential forms on both sides of equation \eqref{st2} are 
given by 
right--translation invariant $1$--forms on $\text{SL}(2,{\mathbb C})$.
Moreover, these two $1$--forms on $\text{SL}(2,{\mathbb C})$ are
invariant under the left--translation action of $\text{SU}(2)$.
Hence both sides of equation \eqref{st2} vanish identically by
Proposition \ref{prop1}.

The two form $\Vert \Omega\Vert_\omega\cdot \omega^2$ is
given by a
right--translation invariant $1$--forms on $\text{SL}(2,{\mathbb C})$
which is also fixed by the left--translation action of $\text{SU}(2)$
on $\text{SL}(2,{\mathbb C})$. Therefore, by Proposition 
\ref{prop2}, the form $\Vert \Omega\Vert_\omega\cdot 
\omega^2$ is a constant scalar multiple of $\omega^2$. Hence
$d(\Vert \Omega\Vert_\omega\cdot \omega^2)\,=\, 0$ by Corollary
\ref{cor1}.

The two $2$--forms on two sides of equation \eqref{st4} are 
given by
right--translation invariant $2$--forms on $\text{SL}(2,{\mathbb C})$
that are fixed by the left--translation action of $\text{SU}(2)$
on $\text{SL}(2,{\mathbb C})$. Therefore, from Proposition \ref{prop2} 
we conclude that \eqref{st4} holds.

Therefore, the sextuple $(M\, ,\theta\, ,\omega\, , \nabla^\omega
\, , E\, ,\nabla)$ solves the Strominger system. We will now show
that equation \eqref{st5} also holds.

Let $R(\nabla^\omega)$ be the curvature of the connection
$\nabla^\omega$ on $TM$. Since $\nabla^\omega$ is the Chern connection 
for $\omega$, we have
$$
R(\nabla^\omega)^{2,0}\,=\,0\,=\, R(\nabla^\omega)^{0,2}\, .
$$

To prove that $R(\nabla^\omega)\bigwedge \omega^2\,=\, 0$, we first
note that $R(\nabla^\omega)\bigwedge \omega^2\,=\, 0$ if and only if
$$
\star_\omega (R(\nabla^\omega)\wedge \omega^2)\, =\, 0\, ,
$$
where $\star_\omega$ is the star operator on differential forms
on $M$ constructed using $\omega$; we note that $\star_\omega (R(\nabla^\omega)
\bigwedge \omega^2)$ is a $C^\infty$ section of $\text{End}(TM)\,=\,
TM \otimes (TM)^*$. Using the identification of 

Consider the evaluation $\star_\omega (R(\nabla^\omega)\bigwedge 
\omega^2)(e)\, \in\,\text{End}(T_e M)$ of $\star_\omega 
(R(\nabla^\omega)\bigwedge \omega^2)$ at $e\, \in\,
\text{SL}(2,{\mathbb C})$. Using the identification of
$T_e M$ with $sl(2,{\mathbb C})$, it will be considered
as an element of
$$
\text{End}(sl(2,{\mathbb C}))\,=\,
sl(2,{\mathbb C})\otimes sl(2,{\mathbb C})^*\, .
$$
The space of invariants 
$\text{End}(sl(2,{\mathbb C}))^{\text{SU}(2)}\, \subset\,
\text{End}(sl(2,{\mathbb C}))$ is one dimensional, and it is 
generated by the identity element $\text{Id}_{sl(2,{\mathbb 
C})}$. In other words, $\star_\omega 
(R(\nabla^\omega)\bigwedge \omega^2)(e)$ is a scalar multiple of 
$\text{Id}_{sl(2,{\mathbb C})}$. Let $\lambda\, \in\, \mathbb C$
be such that
\begin{equation}\label{la}
\star_\omega (R(\nabla^\omega)\bigwedge \omega^2)(e)\, =\, \lambda\cdot
\text{Id}_{sl(2,{\mathbb C})}\, .
\end{equation}

Since $\star_\omega
(R(\nabla^\omega)\bigwedge \omega^2)$ is given by a
section of $\text{End}(T\text{SL}(2,{\mathbb C}))$ which is invariant
under the right--translation action of $\text{SL}(2,{\mathbb C})$ on
itself, from \eqref{la} we conclude that
\begin{equation}\label{la2}
\star_\omega (R(\nabla^\omega)\bigwedge \omega^2)(e)\, =\, \lambda\cdot
\text{Id}_{TM}\, .
\end{equation}
{}From \eqref{la2} it follows immediately that
\begin{equation}\label{la3}
R(\nabla^\omega)\bigwedge \omega^2\,=\, \lambda\cdot \text{Id}_{TM}\otimes
\omega^3\, .
\end{equation}

Since $\star_\omega (R(\nabla^\omega)\bigwedge \omega^2)$ is given by a
section of $\text{End}(T\text{SL}(2,{\mathbb C}))$ which is invariant
under the right--translation action of $\text{SL}(2,{\mathbb C})$ on
itself, to prove that $\star_\omega (R(\nabla^\omega)\bigwedge \omega^2)
\,=\, 0$, it suffices to show that $\lambda\, =\, 0$, where 
$\lambda$ is the scalar in \eqref{la}.

To prove that $\lambda\, =\, 0$, first that $c_1(TM)\,=\, 0$, 
because $TM$ is holomorphically trivial. Hence
$$
\text{trace}(R(\nabla^\omega))\, =\, d\beta
$$
for some smooth $1$--form $\beta$ on $M$.
Therefore,
\begin{equation}\label{la4}
\int_M \text{trace}(R(\nabla^\omega))\wedge \omega^2\,=\, \int_M 
(d\beta)\wedge 
\omega^2\,=\, \int_M \beta\wedge d(\omega^2)\,=\, 0
\end{equation}
by Lemma \ref{lem1}. Now from \eqref{la3},
$$
\int_M \text{trace}(R(\nabla^\omega))\wedge \omega^2
\, =\, 3\lambda\cdot \int_M \omega^3\, .
$$
Since $\int_M \omega^3\, \not=\, 0$, from \eqref{la4} we
conclude that $\lambda\, =\, 0$. Therefore, \eqref{st5}
holds. This completes the proof.
\end{proof}

\medskip
\noindent
\textbf{Acknowledgements.}\, We thank Mahan Mj for pointing out
\cite{La}. The first--named author wishes to thank the Indian
Statistical Institute at Kolkata for providing hospitality 
while the work was carried out.

\end{document}